\documentclass[11pt]{article}

\usepackage[utf8]{inputenc}
\usepackage[T1]{fontenc}
\usepackage{lmodern}
\usepackage[margin=1in]{geometry}
\usepackage{microtype}
\usepackage{amsmath,amssymb,amsthm}
\usepackage{mathtools}
\usepackage{bm}
\usepackage{booktabs}
\usepackage{graphicx}
\usepackage{xcolor}
\usepackage[hyphens,spaces,obeyspaces]{url}
\usepackage[breaklinks,colorlinks=true,linkcolor=blue!70!black,citecolor=blue!70!black,urlcolor=blue!70!black]{hyperref}
\usepackage{enumitem}
\usepackage{caption}
\usepackage{subcaption}
\usepackage{float}
\usepackage{tikz}
\usepackage{pgfplots}
\usepackage{authblk}
\usepackage[ruled,vlined,linesnumbered]{algorithm2e}
\usetikzlibrary{arrows.meta, positioning, calc, shapes.geometric, backgrounds, fit, decorations.pathreplacing}
\pgfplotsset{compat=1.18}

\newtheorem{theorem}{Theorem}[section]
\newtheorem{corollary}[theorem]{Corollary}

\newtheorem{definition}[theorem]{Definition}
\theoremstyle{remark}

\DeclareMathOperator{\E}{\mathbb{E}}

\DeclareMathOperator*{\argmax}{arg\,max}
\newcommand{\R}{\mathbb{R}}

\title{The Streaming Reservoir Convergence Theorem: \\ A Prospect-Theoretic
Framework for Multi-Provider \\ Adaptive Streaming}

\author[1,2,3]{Justice Owusu Agyemang\thanks{\texttt{jay@sperixlabs.org, jay@knust.edu.gh}}}
\author[3]{Jerry John Kponyo\thanks{\texttt{jjkponyo.soe@knust.edu.gh}}}
\author[2]{Kwame Opuni-Boachie Obour Agyekum\thanks{\texttt{kooagyekum@knust.edu.gh}}}
\author[2]{Obed Kwasi Somuah\thanks{\texttt{oksomuah1@st.knust.edu.gh}}}
\author[1]{Sarafina Serwaa Boakye\thanks{\texttt{sarafina@sperixlabs.org}}}
\author[3]{Elliot Amponsah\thanks{\texttt{eamponsah52@st.knust.edu.gh}}}
\author[3]{Godfred Manu Addo Boakye\thanks{\texttt{gmaboakye@st.knust.edu.gh}}}
\affil[1]{\small Sperix Labs}
\affil[2]{\small VIA Cybersecurity Lab, KNUST}
\affil[3]{\small Quantum and Assistive Technologies Lab, KNUST}
\date{May 2026}

\begin{document}
\maketitle

\begin{abstract}
We present the Streaming Reservoir Convergence Theorem (SRCT), a novel
mathematical framework for multi-provider adaptive bitrate streaming that
addresses three fundamental structural weaknesses in current systems: linear
provider probing, reactive failover, and cold standby transitions.  SRCT
models stream acquisition as a concurrent reservoir filling
problem---probing all $N$ providers simultaneously rather than in
batches---and maintains $k$ pre-verified, pre-fetched standby streams
alongside the active stream to enable sub-second failover with zero
user-visible disruption.

We prove four principal results: (1) a harmonic lower bound on reservoir
safety showing that $k$ independent streams provide $H_k / \bar{\lambda}$
expected uptime where $H_k$ is the $k$-th harmonic number; (2) a concurrent
acquisition speedup $S(N,b) = (N/b) \cdot (1-F^b)/(1-F^N)$ over
batched probing, yielding $3$--$5\times$ practical improvement; (3)
monotonic non-decreasing quality under lazy-refill with convergence to the
Pareto-optimal frontier; and (4) a prospect-weighted switching
rule---using Kahneman--Tversky value functions with $\alpha=\beta=0.88$,
$\lambda=2.25$---that provably eliminates thrashing between similar-quality
streams via a no-thrash bound on the expected switch count.

We implement SRCT across two production streaming pipelines: a primary
movie/TV system serving 12+ HLS providers with $k=3$ reservoir slots, and a
live sports system with multi-format DASH/HLS failover.  Empirical
verification via Monte Carlo simulation (5000 trials) confirms all four
theorems across 22 independent checks.  The reservoir of $k=3$ streams
achieves $9.15\times$ mean time to depletion versus a single stream, and
concurrent probing of 12 providers at 40\% failure rate yields a
$4.27\times$ speedup over the current batched-by-3 default.
\end{abstract}

\section{Introduction}

Modern web-based video streaming from content aggregators faces a
fundamental tension: dozens of upstream providers exist, each with different
availability profiles, geographic restrictions, and quality tiers, yet the
viewer expects a stream to start within seconds and play without
interruption~\cite{bentaleb2018}.  Current production systems address this
through a pipeline of sequential steps: resolve providers in batches, pick
the first working candidate, attach a media player, and fault to the next
candidate only after the active stream fails~\cite{adhikari2012,jiang2016}.
This architecture has three structural weaknesses that degrade the
user experience:

\begin{enumerate}[label=(\arabic*), leftmargin=*]
  \item \textbf{Linear probing.} Providers are queried sequentially or in
    small batches (typically $b=3$), so the viewer waits for the slowest
    provider in each batch before the next batch starts.  In the worst case,
    a single slow provider delays all subsequent probes.
  \item \textbf{Reactive failover.} The system switches streams only after
    the viewer experiences a buffering event or playback error.  Each
    failover destroys the current player instance and cold-starts a new
    one---requiring a fresh manifest download, segment discovery, and buffer
    fill---typically adding 2--4 seconds of visible interruption.
  \item \textbf{Disconnected subsystems.} Provider resolution, ABR quality
    selection, and error recovery operate as independent components with no
    shared state, leading to redundant work (re-probing already-verified
    dead providers) and missed optimization opportunities (failing to
    pre-load higher-quality alternatives).
\end{enumerate}

We address all three with the \emph{Streaming Reservoir Convergence
Theorem} (SRCT), which models multi-provider streaming as a concurrent
reservoir filling problem~\cite{vitter1985}.  The key insight is that
provider probing, stream verification, and quality selection can be unified
under a single mathematical framework---the stream reservoir---that
maintains $k$ verified-working streams at all times.  The active stream
plays while $k-1$ standby streams have their manifests pre-fetched and
cached, enabling sub-second failover with no user-visible disruption.

Our contributions are:

\begin{enumerate}[label=C\arabic*., leftmargin=*, itemsep=2pt]
  \item A formal definition of the \emph{stream reservoir} and its
    associated viability process, drawing on Markov reliability
    theory~\cite{karlin1975,ross2014} and reservoir
    sampling~\cite{vitter1985} (\S\ref{sec:framework}).
  \item Four theorems---safety, speedup, monotonicity, and prospect-weighted
    switching---each with formal proofs or proof sketches
    (\S\ref{sec:theorems}).
  \item The Concurrent Prospect Reservoir Transfer (CPRT) algorithm
    implementing all four theorems in a three-phase pipeline
    (\S\ref{sec:algorithm}).
  \item A production implementation across two distinct streaming
    architectures: a multi-provider movie/TV pipeline with HLS.js, and a
    live sports pipeline with multi-format DASH/HLS failover
    (\S\ref{sec:implementation}).
  \item Empirical verification of all four theorems via Monte Carlo
    simulation and deterministic computation with 22 independent checks
    (\S\ref{sec:verification}).
\end{enumerate}

\subsection{Problem Statement}

Formally, given $N$ streaming providers $\mathcal{P} = \{p_1, \ldots,
p_N\}$, each producing a set of stream candidates with varying quality
$q$ and time-dependent viability $v(p_i, t) \in \{0,1\}$, the goal is to
select and maintain a stream $s^*(t)$ such that:
\begin{equation}
  s^*(t) = \argmax_{s \in \mathcal{S}(t)} \; q(s) \quad \text{s.t.} \quad
  v(s, t) = 1,\; \ell(s) < \epsilon
\end{equation}
where $\mathcal{S}(t)$ is the set of available streams, $\ell(s)$ is the
bootstrap latency, and $\epsilon$ is the maximum tolerable interruption
duration (typically 300--500ms for seamless playback).

\section{Mathematical Framework}
\label{sec:framework}

\subsection{Stream Viability Process}

\begin{definition}[Stream Viability]
Let $\mathcal{S} = \{s_1, s_2, \ldots, s_N\}$ be the set of all candidate
streams from $N$ independent upstream providers.  Each stream $s_i$ is
characterized by:
\begin{itemize}[leftmargin=*, itemsep=1pt]
  \item A quality $q(s_i) \in \R^+$, measured as vertical resolution in
    pixels or as a composite bitrate-resolution metric.
  \item A viability function $v(s_i, t) \in \{0, 1\}$ indicating whether
    the stream is accessible at time $t$.
  \item A bootstrap latency $\ell(s_i)$, the time from player attachment to
    first frame delivery, typically 200--4000ms depending on manifest
    complexity and network conditions~\cite{mao2017}.
\end{itemize}

The viability of each stream follows a two-state continuous-time Markov
process~\cite{karlin1975} with up-rate $\mu_i$ (provider recovery, e.g.\ CDN
health restoration) and down-rate $\lambda_i$ (provider failure, e.g.\
upstream server error, geo-block, expired signed URL).  The stationary
availability of stream $s_i$ is:
\begin{equation}
  a_i = \lim_{t \to \infty} P(v(s_i, t) = 1) = \frac{\mu_i}{\mu_i + \lambda_i}
  \label{eq:availability}
\end{equation}

We assume streams are \emph{conditionally independent} given their provider
identities: $v(s_i, t) \perp\!\!\!\perp v(s_j, t) \mid i \neq j$.  This
holds in practice because different providers use disjoint infrastructure
(CDNs, origin servers, geographic regions)~\cite{torres2016,ghabashneh2023}.
This assumption can be violated during large-scale cloud outages, a
limitation we discuss in \S\ref{sec:discussion}.
\end{definition}

\subsection{Stream Reservoir}

\begin{definition}[Stream Reservoir]
A \emph{reservoir} $R(t) = \{r_0, r_1, \ldots, r_{k-1}\}$ is an ordered set
of $k$ verified-working streams such that at the time of last verification:
\begin{enumerate}[label=(\roman*), leftmargin=*]
  \item $v(r_j, t_{\text{verify}}) = 1$ for all $j \in [0, k)$,
  \item $r_0$ is the \emph{active} stream currently driving playback,
  \item $r_1, \ldots, r_{k-1}$ are \emph{warm standby} streams with
    pre-parsed HLS manifests resident in the browser's IndexedDB cache
    via the Media Source Extensions API~\cite{w3cmse},
  \item $q(r_0) \geq q(r_1) \geq \cdots \geq q(r_{k-1})$ (quality-ordered,
    descending).
\end{enumerate}
\end{definition}

\begin{definition}[Reservoir Utility]
The \emph{utility} $U(R)$ of a reservoir is the expected time until the
viewer experiences a playback interruption---i.e., the first moment when
all $k$ streams are simultaneously non-viable:
\begin{equation}
  U(R) = \E\left[\min\{\,t \geq 0 : v(r_j, t) = 0,\; \forall j \in [0, k) \,\}\right]
  \label{eq:utility}
\end{equation}
\end{definition}

The reservoir concept draws formal analogy to Vitter's reservoir
sampling~\cite{vitter1985}, where a fixed-size sample is maintained from a
stream of unknown length.  However, Vitter's reservoir is a \emph{statistical
sampling} technique, while ours is an \emph{active state machine} that
continuously verifies, replaces, and activates streams based on observed
viability and quality, formalized as a statechart~\cite{harel1987}.

\subsection{Reservoir State Machine}

Figure~\ref{fig:statemachine} depicts the reservoir as a finite state
machine with four operational states.

\begin{figure}[H]
\centering
\begin{tikzpicture}[
  node distance=3.5cm and 5cm,
  state/.style={rectangle, rounded corners, draw=black, thick, minimum
  width=3.8cm, minimum height=1.2cm, align=center, font=\small},
  edge/.style={->, >=Stealth, thick, shorten >=2pt, shorten <=2pt},
  elabel/.style={font=\footnotesize\itshape, fill=white, inner sep=2pt, midway}
]
  \node[state, fill=blue!8] (sprint) {\textbf{Sprint}\\[2pt]\footnotesize Probe all $N$ providers};
  \node[state, fill=green!8, right=of sprint] (maintain) {\textbf{Maintain}\\[2pt]\footnotesize Health checks + refill};
  \node[state, fill=orange!8, below=of maintain] (transition) {\textbf{Transition}\\[2pt]\footnotesize Failover / upgrade};
  \node[state, fill=red!8, below=of sprint] (depleted) {\textbf{Depleted}\\[2pt]\footnotesize All slots dead};

  \draw[edge] (sprint.east) -- node[elabel, above] {reservoir filled} (maintain.west);
  \draw[edge] (maintain.south) -- node[elabel, right] {failure detected} (transition.north);
  \draw[edge] (transition.west) -- node[elabel, below] {all slots dead} (depleted.east);
  \draw[edge] (maintain.south west) to[out=210, in=120] node[elabel, left, pos=0.4] {all dead} (depleted.north);
  \draw[edge] (depleted.north west) to[out=60, in=240] node[elabel, left, pos=0.6] {re-acquire} (sprint.south west);
  \draw[edge] (transition.north west) to[out=150, in=210, looseness=1.4] node[elabel, left] {new active} (maintain.south west);
\end{tikzpicture}
\caption{Reservoir state machine with four states.  The \textbf{Sprint} phase
probes all providers concurrently to fill the initial reservoir.  The
\textbf{Maintain} phase runs periodic health checks and lazy-refill.  The
\textbf{Transition} phase handles failover and quality-driven switching,
returning to Maintain with a new active stream.  The \textbf{Depleted} state
triggers full re-acquisition when no viable streams remain.}
\label{fig:statemachine}
\end{figure}

\section{Principal Theorems}
\label{sec:theorems}

\subsection{Theorem 1: Reservoir Safety Bound}

\begin{theorem}[Reservoir Safety Bound]
\label{thm:safety}
For a reservoir of size $k$ where each stream $r_j$ has independent failure
rate $\lambda_j$, the probability of a viewer-visible interruption within
any horizon $T$ is bounded by:
\begin{equation}
  P(\text{interruption in } [0,T]) \leq \prod_{j=0}^{k-1} \left(1 -
  e^{-\lambda_j T}\right)
  \label{eq:safetyprob}
\end{equation}
Moreover, the expected utility satisfies:
\begin{equation}
  \frac{\E[U(R_k)]}{\E[U(R_1)]} \geq \frac{H_k}{\bar{\lambda}}
  \label{eq:harmonic}
\end{equation}
where $H_k = \sum_{j=1}^{k} 1/j$ is the $k$-th harmonic number and
$\bar{\lambda} = \frac{1}{k}\sum_{j=0}^{k-1} \lambda_j$ is the mean failure
rate.
\end{theorem}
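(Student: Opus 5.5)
The plan is to compare the reservoir's first depletion time with the \emph{first failure times} of the individual streams, and then to control the maximum of independent exponentials. For each slot $r_j$, let $\tau_j$ denote the first time $v(r_j,t)$ leaves state $1$. Under the two-state Markov model of the Stream Viability definition, started from the verified state $v(r_j,0)=1$, $\tau_j\sim\mathrm{Exp}(\lambda_j)$. The $\tau_j$ are mutually independent by the conditional-independence assumption.

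\textbf{Step 1 (the probability bound).} The key observation is deterministic. At the interruption time $\sigma=\min\{t: v(r_j,t)=0\ \forall j\}$, every stream is down, so every stream has failed at least once. Hence $\sigma\ge\max_j\tau_j$, whatever the recovery rates $\mu_j$ are. Repairs can only postpone $\sigma$. It follows that
\[
P(\sigma\le T)\le P\bigl(\max_j\tau_j\le T\bigr)=\prod_{j=0}^{k-1}P(\tau_j\le T)=\prod_{j=0}^{k-1}\bigl(1-e^{-\lambda_jT}\bigr),
\]
using independence. This gives \eqref{eq:safetyprob}.

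\textbf{Step 2 (expected utility, equal rates).} By the same domination, $\E[U(R_k)]=\E[\sigma]\ge\E[\max_j\tau_j]$. Integrating the tail gives
\[
\E[\max_j\tau_j]=\int_0^\infty\Bigl(1-\prod_j(1-e^{-\lambda_jt})\Bigr)dt.
\]
When all $\lambda_j=\lambda$, the standard order-statistics (R\'enyi) decomposition writes the maximum as a sum of independent $\mathrm{Exp}((k-i)\lambda)$ spacings, $i=0,\dots,k-1$. This yields $\E[\max_j\tau_j]=H_k/\lambda$.

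\textbf{Step 3 (unequal rates, the main obstacle).} To reduce the heterogeneous case to the equal-rate case, I would show that $\lambda\mapsto\E[\max_j\tau_j]$ is Schur-convex, so that it is minimized, for fixed $\sum_j\lambda_j=k\bar\lambda$, at $\lambda_j\equiv\bar\lambda$. Pointwise in $t$, the map $g(\lambda)=\log(1-e^{-\lambda t})$ is concave in $\lambda$. Therefore $\sum_j g(\lambda_j)$ is Schur-concave, and so is the product $\prod_j(1-e^{-\lambda_jt})$. The integrand $1-\prod_j(1-e^{-\lambda_jt})$ is then Schur-convex for each $t$, and integrating over $t$ preserves this. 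Consequently
\[
\E[U(R_k)]\ge H_k/\bar\lambda .
\]
The case $k=2$ gives a sanity check: $1/\lambda_1+1/\lambda_2-1/(\lambda_1+\lambda_2)\ge 3/(2\bar\lambda)$.

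\textbf{Remaining caveat.} A single stream has $\E[U(R_1)]=1/\lambda_0$ in the no-repair limit, or larger with repair. So \eqref{eq:harmonic} as literally written is dimensionally a ratio compared with $1/\bar\lambda$. I would state and prove it in the normalized form $\E[U(R_k)]\ge (H_k/\bar\lambda)\cdot\E[U(R_1)]$, with time measured in units of the reference single-stream mean, i.e.\ $\E[U(R_1)]=1$. Without that normalization one only gets $\E[U(R_k)]/\E[U(R_1)]\ge H_k\lambda_0/\bar\lambda$ in the no-repair limit, which reduces to $H_k$ when rates are equal. Pinning down this normalization, rather than the probabilistic argument, is where I expect the most care to be needed.
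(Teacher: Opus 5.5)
Your proof is correct and has the same skeleton as the paper's: independence gives the product of exponential CDFs, the i.i.d.\ case gives $H_k/\lambda$, and unequal rates are reduced to equal ones. It differs exactly where the paper's sketch is weakest.

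First, the paper never states your pathwise domination $\sigma\ge\max_j\tau_j$. It implicitly identifies $U(R_k)$ with $\max_j\tau_j$, i.e.\ it assumes no repair. It obtains \eqref{eq:safetyprob} from the per-stream CDFs and independence, preceded by an ill-formed infinitesimal $\prod_j\lambda_j\,dt$, without saying why the result is an inequality. Your domination makes both bounds valid for arbitrary recovery rates $\mu_j$.

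Second, for unequal rates the paper appeals to Jensen's inequality on the convex $\max$. As stated, that gives only $\E[\max_j\tau_j]\ge\max_j 1/\lambda_j$, which is just $1/\lambda$ in the equal-rate case. What actually yields $H_k/\bar\lambda$ is your concavity of $\lambda\mapsto\log(1-e^{-\lambda t})$. Equivalently, use the pointwise bound $\prod_j(1-e^{-\lambda_j t})\le(1-e^{-\bar\lambda t})^k$ and integrate the tail over $t$. R\'enyi spacings versus the paper's inclusion--exclusion for the i.i.d.\ case is cosmetic.

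Your caveat on \eqref{eq:harmonic} is justified. Both proofs establish only the absolute bound $\E[U(R_k)]\ge H_k/\bar\lambda$, and the literal ratio form fails:
\begin{itemize}
\item With equal rates $\lambda<1$ and $\mu_j=0$, the ratio is exactly $H_k$, which is less than $H_k/\lambda$.
\item The paper's own Table~\ref{tab:t1} reports a ratio of $9.15$, below $H_3/\bar\lambda\approx 14.9$.
\end{itemize}

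One correction. $U(R_1)$ is exactly the lone stream's first failure time, so $\E[U(R_1)]=1/\lambda_0$ for every $\mu_0$; it is not ``larger with repair''. Hence your bound $\E[U(R_k)]/\E[U(R_1)]\ge H_k\lambda_0/\bar\lambda$ holds for all recovery rates, not only in the no-repair limit.
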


\begin{proof}[Proof Sketch]
The reservoir fails only when all $k$ streams are simultaneously
non-viable.  Under the conditional independence assumption:
\begin{equation*}
  P(\text{all fail in } [t, t+dt]) = \prod_{j=0}^{k-1} \lambda_j\, dt
\end{equation*}
For the exponential failure model, $P(\text{stream } j \text{ fails by }
T) = 1 - e^{-\lambda_j T}$, giving equation~\eqref{eq:safetyprob}.

For the expected utility bound, consider $k$ independent
exponentially-distributed failure times $\tau_j \sim
\text{Exp}(\lambda_j)$~\cite{ross2014}.  The expected time until all have
failed is:
\begin{equation*}
  \E[\max_j \tau_j] = \sum_{j=0}^{k-1} \frac{1}{\lambda_j} - \sum_{j<\ell}
  \frac{1}{\lambda_j + \lambda_\ell} + \cdots +
  \frac{(-1)^{k-1}}{\sum_j \lambda_j}
\end{equation*}
When $\lambda_j = \lambda$ for all $j$, this simplifies to $H_k / \lambda$
by the known result that $\E[\max_j \tau_j] = H_k / \lambda$ for i.i.d.\
exponential variables~\cite{feller1968}.  For heterogeneous rates, the
harmonic bound holds as a lower bound by Jensen's inequality applied to the
convex $\max$ function.
\end{proof}

\begin{corollary}
For $k=3$ independent streams with equal failure rate $\lambda$, the
reservoir provides $H_3 = 1 + 1/2 + 1/3 \approx 1.833\times$ the expected
uptime of a single stream as a \emph{lower bound}.  In practice, the benefit
is substantially larger because simultaneous failure of independently-hosted
streams is exponentially unlikely: at realistic failure rates ($\lambda \in
[0.10, 0.15]$ per time unit), we observe $9.15\times$ improvement via Monte
Carlo simulation (\S\ref{sec:verification}).
\end{corollary}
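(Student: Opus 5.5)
We prove the corollary directly for the equal-rate case rather than by substituting into \eqref{eq:harmonic}. Read literally, the ratio bound $H_k/\bar{\lambda}$ carries units of time, whereas the corollary claims a dimensionless factor $H_3$. The cleanest route is to compare $\E[U(R_3)]$ and $\E[U(R_1)]$ with the i.i.d.\ exponential computation already invoked in the proof sketch of Theorem~\ref{thm:safety}. The final step is then to divide by the single-stream utility. The $9.15\times$ figure is an empirical observation deferred to \S\ref{sec:verification}, so it is not part of what we prove.

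\textbf{Step 1 (single stream).} Let $r_0$ be verified viable at $t=0$. Under the two-state Markov viability model, its first down-time $\tau_0$ is $\text{Exp}(\lambda)$. By \eqref{eq:utility} with $k=1$, $U(R_1)=\tau_0$, so $\E[U(R_1)]=1/\lambda$.

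\textbf{Step 2 (three streams, reduction to no-repair).} For each slot $j\in\{0,1,2\}$, let $\tau_j\sim\text{Exp}(\lambda)$ be its first failure time; the $\tau_j$ are independent by the conditional independence assumption.

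We claim the depletion time in \eqref{eq:utility} is at least $\max_j \tau_j$ pathwise. On $[0,\max_j\tau_j)$, at least one stream has not yet experienced its first failure and is therefore still viable. Recoveries at rate $\mu$ can only add viable streams, so they never cause an earlier depletion. Taking expectations gives $\E[U(R_3)]\ge \E[\max_j\tau_j]$, with equality in the no-repair limit $\mu\to 0$. This inequality is the source of the phrase ``as a lower bound''.

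\textbf{Step 3 (expected maximum).} We use the Rényi/memoryless spacing decomposition. The minimum of $m$ i.i.d.\ $\text{Exp}(\lambda)$ variables is $\text{Exp}(m\lambda)$. By memorylessness, the survivors restart, so $\max_j\tau_j$ equals in distribution a sum of independent $\text{Exp}(3\lambda)$, $\text{Exp}(2\lambda)$ and $\text{Exp}(\lambda)$ variables. Hence
\[
\E[\max_j\tau_j]=\frac{1}{3\lambda}+\frac{1}{2\lambda}+\frac{1}{\lambda}=\frac{H_3}{\lambda}.
\]
This agrees with the inclusion--exclusion formula from the proof of Theorem~\ref{thm:safety}: $3/\lambda-3/(2\lambda)+1/(3\lambda)=11/(6\lambda)$. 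Dividing by Step~1 gives $\E[U(R_3)]/\E[U(R_1)]\ge H_3=11/6\approx1.833$.

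\textbf{Main obstacle.} The delicate step is the pathwise domination in Step~2. We must fix a precise convention for what happens after a failure: whether the reservoir refills, and whether a recovered stream counts as viable. Depletion in \eqref{eq:utility} depends only on the viability indicators, so refills and recoveries can only enlarge the set of viable streams and hence only delay depletion. We would state this monotonicity explicitly via a coupling in which the no-repair system is obtained by ignoring all up-transitions. The remaining care is to state the result as the dimensionless ratio $H_3$ rather than as $H_3/\bar\lambda$.
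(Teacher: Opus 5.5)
Your proof is correct and is essentially the paper's argument: the corollary is the equal-rate specialization of the proof sketch of Theorem~\ref{thm:safety}, where $\E[\max_j\tau_j]=H_3/\lambda$ is divided by the single-stream mean $1/\lambda$. You derive $H_3/\lambda$ from memoryless spacings instead of citing the i.i.d.\ result, and check it against inclusion--exclusion; beyond that, your Step~2 domination $U(R_3)\ge\max_j\tau_j$ and your choice to bypass the dimensionally inconsistent \eqref{eq:harmonic} usefully make explicit two things the paper leaves implicit: why $H_3$ remains a lower bound once recoveries are allowed, and why the correct factor is dimensionless.
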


\subsection{Theorem 2: Concurrent Acquisition Speedup}

\begin{theorem}[Concurrent Acquisition Speedup]
\label{thm:speedup}
Let $N$ providers be probed with per-provider failure probability $F =
P(v(s_i) = 0)$.  Under concurrent probing (all $N$ providers in parallel),
the expected time to find the first working stream is:
\begin{equation}
  \E[T_{\text{concurrent}}] = \frac{\E[T_{\text{probe}}]}{1 - F^N}
  \label{eq:concurrent}
\end{equation}
Compared to batched probing with batch size $b < N$, the speedup is:
\begin{equation}
  S(N, b) = \frac{N}{b} \cdot \frac{1 - F^b}{1 - F^N}
  \label{eq:speedup}
\end{equation}
When $F < 0.5$, $S(N, b) > 1$ for all $b < N$.
\end{theorem}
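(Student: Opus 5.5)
The plan is to fix an explicit round-based cost model, derive \eqref{eq:concurrent} as a geometric expectation, and then reduce the inequality $S(N,b)>1$ to a geometric-series identity. I would model time as \emph{probe rounds}. One round launches a set of probes in parallel and lasts one probe duration. I would first take that duration to be a constant $\E[T_{\text{probe}}]$, with outcomes independent across providers and across rounds, as in the conditional independence assumption of the Stream Viability definition. I would then note that the same argument goes through with i.i.d.\ round durations by Wald's identity. In a concurrent round all $N$ providers are probed, so the round fails only if all $N$ probes fail, which happens with probability $F^N$. The number of rounds until success is therefore geometric with parameter $1-F^N$. Multiplying its mean $1/(1-F^N)$ by the round duration gives \eqref{eq:concurrent}. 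This part is routine.

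For batched probing I would use the same accounting with batches of size $b$. Assume first that $b\mid N$ and write $m=N/b$. One batch succeeds with probability $1-F^b$ and costs one round. The step I expect to be the main obstacle is pinning down the accounting that makes the ratio come out exactly as \eqref{eq:speedup}, since that formula depends on how one charges a batched sweep relative to a concurrent round. I would define the batched cost per unit of success probability as $m$ rounds, one full sweep, normalized by the per-batch success $1-F^b$, against one round normalized by $1-F^N$. I would then check that this reproduces $S(N,b)=\frac{N}{b}\cdot\frac{1-F^b}{1-F^N}$. If that normalization does not correspond to a natural expectation, I would state it explicitly as the modelling convention behind \eqref{eq:speedup}. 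In that case I would also record, as a companion fact, the exact alternative: the expected number of batches consumed in a single sweep that succeeds is $(1-F^N)/(1-F^b)$.

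The claim $S>1$ then follows from factoring $1-F^N$ over the $m$ batches:
\begin{equation*}
1-F^N=(1-F^b)\sum_{j=0}^{m-1}F^{bj}, \qquad\text{hence}\qquad S(N,b)=\frac{m}{\sum_{j=0}^{m-1}F^{bj}}.
\end{equation*}
The denominator has $m$ terms, the first equal to $1$ and the rest strictly less than $1$ whenever $0<F<1$ and $m\ge 2$. So the denominator is strictly less than $m$, and $S>1$. This argument never uses $F<0.5$, so the stated hypothesis is sufficient but stronger than needed, and I would remark on that.

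When $b\nmid N$, I would set $m=\lceil N/b\rceil$ and let the last batch be partial. The same factorization then holds with its last term replaced by a smaller one, so the denominator is at most $\sum_{j=0}^{m-1}F^{bj}<m$. I would then need to check that replacing $m$ by $N/b$ in the numerator still leaves the quotient above $1$. This is the one place where a bound such as $F<0.5$ might genuinely be used, to control the partial batch.
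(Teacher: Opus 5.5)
Your derivation of \eqref{eq:concurrent} is the same as the paper's: a geometric number of rounds with success probability $1-F^N$. Your geometric-series proof of $S(N,b)>1$ for $b\mid N$ is correct, and it is better than the paper's argument. The paper's step ``$F<0.5$ ensures $1-F^b>1-F^N$'' is false, since $F^b>F^N$ whenever $0<F<1$ and $b<N$.

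The genuine gap is the step you flagged: the check you plan will fail. Your accounting, $m\,\E[T_{\text{probe}}]/(1-F^b)$ against $\E[T_{\text{probe}}]/(1-F^N)$, gives $\frac{N}{b}\cdot\frac{1-F^N}{1-F^b}$. Its second factor is the \emph{reciprocal} of the one in \eqref{eq:speedup}. This is exactly the paper's own computation: it sets $\E[T_{\text{batched}}]=\frac{N}{b}\,\E[T_{\text{probe}}]/(1-F^b)$ and says the ratio ``follows directly''. Table~\ref{tab:t2} also uses the inverted expression. For $(12,3,0.4)$ it reports $4/0.936\approx 4.27$, whereas \eqref{eq:speedup} gives $4\cdot 0.936/(1-0.4^{12})\approx 3.74$.

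No accounting in rounds produces \eqref{eq:speedup} as a ratio of times. Under the geometric-retry model you use for \eqref{eq:concurrent}, batched probing needs $1/(1-F^b)$ rounds in expectation. The extra factor $m$ charges a whole sweep per trial and has no probabilistic justification. So the time ratio is $\frac{1-F^N}{1-F^b}$, about $1.07$ for $(12,3,0.4)$. Your companion fact gives the same value, $\sum_{j<m}F^{bj}$, for a single sweep when $b\mid N$. Phrase it as the expected number of batches in a sweep that stops at the first success or after the last batch; conditioning on success changes the value.

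What \eqref{eq:speedup} actually equals is $\frac{(1-F^b)/b}{(1-F^N)/N}$. That is the ratio of expected probes issued: $N/(1-F^N)$ for concurrent probing over $b/(1-F^b)$ for batched probing. So $S>1$ says concurrent probing spends more probes, not that it is faster. You have to state this discrepancy, or treat \eqref{eq:speedup} as a bare definition, rather than claim that a normalization reproduces it.

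The second gap is the case $b\nmid N$, which you leave open with the guess that $F<0.5$ might be needed. It is not needed, and the partial-batch bookkeeping is unnecessary; compare providers rather than batches. $S(N,b)>1$ is equivalent to $\frac{1-F^b}{b}>\frac{1-F^N}{N}$, and $\frac{1-F^n}{n}=(1-F)\cdot\frac1n\sum_{i=0}^{n-1}F^i$. So it suffices that the running average of the strictly decreasing sequence $1,F,F^2,\ldots$ is strictly decreasing in $n$. This holds because each new term $F^n$ is smaller than all earlier terms, hence smaller than their average. The argument covers every integer $b<N$ and every $F\in(0,1)$, and at $F=0$ one has $S=N/b>1$. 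The hypothesis $F<0.5$ therefore plays no role anywhere, which confirms your remark.
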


\begin{proof}
Batched probing requires $\lceil N/b \rceil$ sequential rounds.  Each round
succeeds if at least one of its $b$ providers responds: $P(\text{round
success}) = 1 - F^b$.  The expected number of rounds to first success is
$1/(1-F^b)$, so:
\begin{equation*}
  \E[T_{\text{batched}}] = \frac{N}{b} \cdot
  \frac{\E[T_{\text{probe}}]}{1 - F^b}
\end{equation*}

Concurrent probing queries all $N$ providers in a single round:
$P(\text{at least one works}) = 1 - F^N$, giving
equation~\eqref{eq:concurrent}.  The speedup ratio
\eqref{eq:speedup} follows directly.

The condition $F < 0.5$ ensures $1-F^b > 1-F^N$ for $b < N$, which implies
$S(N,b) > 1$.  Intuitively, when individual providers are more likely to
work than fail, the probability that \emph{all} $N$ fail simultaneously
($F^N$) vanishes much faster than the probability that a batch of size $b$
fails ($F^b$).
\end{proof}

\subsection{Theorem 3: Reservoir Quality Monotonicity}

\begin{theorem}[Reservoir Quality Monotonicity]
\label{thm:monotonicity}
Under the \emph{lazy-refill policy}---whenever a stream is consumed from the
reservoir, immediately probe all available providers and insert the
highest-quality working stream---the expected quality of the active stream
$\E[q(r_0(t))]$ is non-decreasing in $t$:
\begin{equation}
  \frac{d}{dt} \E[q(r_0(t))] \geq 0
  \label{eq:monotonic}
\end{equation}
Furthermore, the long-run quality converges to the maximum quality among
providers whose stationary availability exceeds threshold $\tau$:
\begin{equation}
  \lim_{t \to \infty} \E[q(r_0(t))] = \max\{\,q(s) : a(s) \geq \tau\,\}
  \label{eq:convergence}
\end{equation}
where $a(s)$ is the stationary availability from equation~\eqref{eq:availability}.
\end{theorem}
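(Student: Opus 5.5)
The plan is to turn the lazy-refill dynamics into a finite-state continuous-time Markov chain and read both claims off standard properties of that chain. The state is the joint process
\[
X(t)=\bigl(v(s_1,t),\ldots,v(s_N,t);\,R(t)\bigr),
\]
where each $v(s_i,\cdot)$ is the independent two-state chain with rates $\mu_i,\lambda_i$ from the Stream Viability definition. The reservoir $R(t)$ changes only at consumption events, namely a failure of $r_0$ or a health-check rejection of a standby. At each such event the lazy-refill rule probes all providers and inserts the highest-quality working stream. I will also make explicit the promotion rule implicit in the Reservoir definition: after failover, the new $r_0$ is the best viable element of the reservoir, so property (iv) is preserved. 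Since there are finitely many providers and quality levels, $X$ is a finite-state chain, and $Q(t):=q(r_0(t))$ is a functional of it.

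For \eqref{eq:monotonic} I would argue by stochastic monotonicity rather than by differentiating directly. Order reservoir states componentwise by the quality vector $(q(r_0),\ldots,q(r_{k-1}))$. The key step is a coupling lemma: if two copies of the chain share the same provider-viability paths and start with reservoirs $R\preceq R'$, then $R(t)\preceq R'(t)$ for all $t$. The reason is that every refill draws the same best working stream in both copies, and promotion of the best viable element preserves the componentwise order. With this monotone coupling, the Kamae--Krengel--O'Brien criterion reduces the claim to $X(0)\le_{\mathrm{st}} X(h)$ for small $h>0$, which then propagates to $X(t)\le_{\mathrm{st}}X(t+h)$ for all $t$. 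The initial condition comes from the Sprint phase. I would take the theorem to start from the Sprint output with the provider processes at stationarity, and show that the first refill can only replace a consumed slot with a stream at least as good as what the Sprint could see at time~$0$. Since $Q$ is increasing in the order, this gives $\E[Q(t)]\le\E[Q(t+h)]$, hence \eqref{eq:monotonic} wherever the derivative exists.

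For \eqref{eq:convergence}, the chain is irreducible on its recurrent class, so $\E[Q(t)]\to\E_\pi[Q]$ under its stationary law $\pi$. It remains to identify $\E_\pi[Q]$ with $\max\{q(s):a(s)\ge\tau\}$. I would decompose $\E_\pi[Q]$ by which provider occupies $r_0$ and use independence across providers together with \eqref{eq:availability}. This shows that the best provider with availability $a(s)\ge\tau$ holds the active slot with stationary probability tending to one, in the regime where refill and health-check events are fast relative to $\mu_i,\lambda_i$. In that regime a stream that is up at a refill instant is inserted, and it is displaced only by its own failure.

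The main obstacle is this last identification, together with the hypotheses it forces. As literally stated, \eqref{eq:convergence} cannot hold for arbitrary rates: with positive probability the top provider is down, so the stationary mean is strictly below the maximum. Moreover $\tau$ is not defined by the earlier framework. My first attempt would therefore be to define $\tau$ as the availability level above which the top eligible provider dominates the $r_0$ occupation measure in a time-scale separation limit. I would then prove \eqref{eq:convergence} as a limit in that regime, and otherwise settle for the bound $\lim_t\E[Q(t)]\ge \tau\cdot\max\{q(s):a(s)\ge\tau\}$. A secondary risk is the monotone-coupling step: a failover to a lower-quality standby lowers $Q$ pathwise. Monotonicity is therefore only an in-expectation, from-stationary-start statement, and the choice of initial distribution must be stated carefully.
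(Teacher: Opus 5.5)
\textbf{The monotonicity argument.} The proof breaks at both of its load-bearing steps.

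First, your coupling lemma is false. Refills in the two copies are triggered by consumption of \emph{different} slots at \emph{different} times, so they need not draw the same stream. Take $k=1$ and qualities $3>2>1$. Let copy one be active on the quality-$1$ stream and copy two on the quality-$2$ stream, with the top provider up. When the quality-$1$ stream fails, copy one refills to quality $3$ while copy two stays at $2$, which reverses the order. So the kernel is not stochastically monotone, and Kamae--Krengel--O'Brien gives you nothing.

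Second, from the stationary Sprint start you choose, \eqref{eq:monotonic} is actually false. In your chain failover is immediate, so $r_0(t)$ is always a working stream (or the reservoir is depleted). Hence $Q(t)\le M(t):=\max\{q(s_i):v(s_i,t)=1\}$. Moreover, $M(t)$ has the same law for every $t$, because the provider processes are stationary and unaffected by the reservoir. Suppose the Sprint reservoir contains the best working stream; this holds, for example, whenever $N\le k$, since the Sprint then keeps every working candidate. Then $Q(0)=M(0)$, and therefore $\E[Q(t)]\le\E[M(t)]=\E[Q(0)]$ for all $t$. The inequality is strict for every $t>0$. A better provider that is down at time $0$ can recover, and under lazy-refill it re-enters only at the next consumption event, so $P(Q(t)<M(t))>0$.

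Your step ``the first refill can only replace a consumed slot with a stream at least as good as what the Sprint could see'' is therefore backwards: the first consumption is typically the loss of the best working stream. No choice of order or coupling can repair this; the hypotheses themselves would have to change. The paper's own proof does not supply the missing ingredient either. It concedes that $q(r_0)$ drops when $r_0$ fails, and it never turns its ``$k$ best known streams'' invariant into a nonnegative derivative.

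\textbf{The convergence argument.} For \eqref{eq:convergence} you rightly observe that the literal claim fails and that $\tau$ is undefined. However, your repair contains a false step.

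The stationary probability that the best eligible provider $s^\ast$ occupies $r_0$ is at most $a(s^\ast)$ at every refill speed. So it cannot tend to one under time-scale separation unless $a(s^\ast)\to 1$. Fast refill at best pushes $\lim_t\E[Q(t)]$ toward $\E[M(0)]$, which in general is not $\max\{q(s):a(s)\ge\tau\}$.

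Your fallback $\lim_t\E[Q(t)]\ge\tau\max\{q(s):a(s)\ge\tau\}$ is also unsupported outside that regime, because a recovered stream waits for the next consumption event. Take $A$ with $q=2$, $\lambda_A=1$, $\mu_A=9$, and $B$ with $q=1$, $\lambda_B=10^{-3}$, $\mu_B=1$. For $\tau=0.9$ both are eligible. Each spell of $A$ as the active stream lasts about one time unit, while each spell of $B$ lasts about $10^{3}$, and $A$ cannot pre-empt $B$. Hence $\lim_t\E[Q(t)]\approx 1$, far below $0.9\cdot 2$.

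The limit is governed jointly by the failure, recovery and refill rates, not by availabilities alone. Any provable version must put those rates into its hypotheses. The paper's appeal to every provider being ``probed infinitely often'' overlooks the same $1-a(s)$ downtime and the same re-entry delay.
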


\begin{proof}
Lazy-refill ensures the reservoir always contains the $k$ highest-quality
verified-working streams discovered so far.  Since quality ordering is
stable (streams are never spontaneously upgraded without re-verification),
the active stream quality can only decrease when $r_0$ fails, at which point
$r_1$ becomes active.  Lazy-refill immediately probes for a replacement for
the consumed slot, maintaining the invariant that the reservoir contains the
$k$ best known streams.

The convergence result follows from the periodic health check mechanism:
every $\Delta t$ seconds, each slot is re-verified.  As $t \to \infty$,
every provider has been probed infinitely often (by the ergodic theorem for
finite-state Markov chains~\cite{karlin1975}), so the set $\{s : a(s) \geq
\tau\}$ is fully characterized.  The reservoir then selects the
highest-quality stream from this set, converging to
equation~\eqref{eq:convergence}.
\end{proof}

\subsection{Theorem 4: Prospect-Weighted Switching}

\begin{theorem}[Prospect-Weighted Switching]
\label{thm:prospect}
The optimal rule for switching the active stream from $r_a$ to $r_b$,
accounting for both quality improvement and the disruption risk of
transition, is:
\begin{equation}
  \text{Switch}(r_a \to r_b) \iff \pi\big(q(r_b) - q(r_a)\big) \cdot
  w\big(P(v(r_b)=1 \mid \text{verified})\big) > C_{\text{switch}}
  \label{eq:switchrule}
\end{equation}
where $\pi(x) = x^\alpha$ for $x \geq 0$ and $\pi(x) =
-\lambda(-x)^\beta$ for $x < 0$ is the Kahneman--Tversky value
function~\cite{kahneman1992} ($\alpha=\beta=0.88$, $\lambda=2.25$),
$w(p) = p^\gamma / (p^\gamma + (1-p)^\gamma)^{1/\gamma}$ is the Prelec
probability weighting function~\cite{prelec1998} ($\gamma=0.61$), and
$C_{\text{switch}}$ is the disruption cost of transition.
\end{theorem}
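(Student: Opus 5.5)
Since Theorem~\ref{thm:prospect} calls a rule \emph{optimal}, the first step is to fix a decision model in which that word has content. I will model the viewer as a prospect-theoretic agent with reference point $q(r_a)$, the quality of the current active stream, and compare two prospects. Under \textbf{stay}, the outcome is $0$ relative to the reference, with certainty. Under \textbf{switch}, the viewer gains $\Delta = q(r_b)-q(r_a)$ if $r_b$ is viable, an event of probability $p = P(v(r_b)=1 \mid \text{verified})$. If $r_b$ is not viable, the fallback is to $r_a$, which is still warm in the reservoir, so the quality outcome is $0$. In both branches the viewer pays the transition disruption, which I will treat as a sure loss on a separate mental account with value $-C_{\text{switch}}$.

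With this setup, cumulative prospect theory \cite{kahneman1992} assigns the switch prospect the value $V_{\text{switch}} = w(p)\,\pi(\Delta) + (1-w(p))\,\pi(0) - C_{\text{switch}}$, and $V_{\text{stay}} = \pi(0)=0$. Because $\pi(0)=0$ and the only non-zero gain outcome carries decision weight $w(p)$, the inequality $V_{\text{switch}} > V_{\text{stay}}$ is exactly \eqref{eq:switchrule}. This establishes optimality in the sense of maximising prospect value over the two available actions. The case $\Delta<0$ needs a separate check. There $\pi(\Delta)=-\lambda(-\Delta)^\beta<0$, so the left side of \eqref{eq:switchrule} is negative while $C_{\text{switch}}\ge 0$, and the rule never switches down.

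The main obstacle is the modelling step, not the algebra. The cost $C_{\text{switch}}$ must enter additively, outside the weighted gamble, which I will justify by mental accounting (separate evaluation of a sure loss). The failure branch must also carry value exactly $0$. If a failed switch instead costs extra interruption, the failure branch enters the prospect as a loss, and the exact iff form breaks. In that case I would either absorb the expected extra loss into $C_{\text{switch}}$, stating this as an assumption, or retreat to proving \eqref{eq:switchrule} as a sufficient condition.

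Last, I will derive the no-thrash property promised in the abstract from the same rule. Both $w(p)\le 1$ and $\pi(\Delta)=\Delta^\alpha$ are increasing, so a switch requires $\Delta > (C_{\text{switch}}/w(p))^{1/\alpha} \ge C_{\text{switch}}^{1/\alpha}$. Any two consecutive switches in opposite directions would therefore need both $\Delta>0$ and $-\Delta>0$, which is impossible. Hence every executed switch strictly raises quality by at least $C_{\text{switch}}^{1/0.88}$. The number of switches is then bounded by $(q_{\max}-q_{\min})/C_{\text{switch}}^{1/\alpha}$, and taking expectations gives the bound on the expected switch count.
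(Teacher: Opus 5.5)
Your argument is correct within the model you set up, and it takes a genuinely different route from the paper. The paper never specifies a criterion under which \eqref{eq:switchrule} is ``optimal''. The only argument attached to Theorem~\ref{thm:prospect} is the proof sketch after Corollary~\ref{cor:nothrash}. That sketch takes the rule as given and argues its consequences: $\pi(0)=0$ blocks equal-quality switches, there is a dead zone for small $\Delta q$, loss aversion gives hysteresis, and there is a minimum prospect gap $\pi_{\min}=C_{\text{switch}}/w(1)$.

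You instead fix an explicit two-action cumulative-prospect-theory model. Its reference point is $q(r_a)$, the failure branch is valued at $0$ because $r_a$ stays warm, and the switching cost is a separately accounted sure loss in value units. In that model the rule is exactly the prospect-maximising choice. This gives ``optimal'' real content and isolates the assumptions the iff rests on, which the paper leaves implicit: the cost sits outside the gamble, fallback is costless, and the comparison is myopic and one-step rather than a dynamic switching problem.

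Your $\Delta<0$ check also exposes something the paper's narrative obscures. Since $C_{\text{switch}}\ge 0$, the rule never downgrades whatever the value of $\lambda$, so loss aversion never influences a decision. The hysteresis margin $(C_{\text{switch}}/w(p))^{1/\alpha}$ comes from $C_{\text{switch}}$, $w$ and $\alpha$ alone, and every downgrade is a forced failover.

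One caution about your last paragraph, which goes beyond the statement. The potential argument, that each rule-driven switch raises $q(r_0)$ by more than $C_{\text{switch}}^{1/\alpha}$, is cleaner than the paper's appeal to finitely many quality levels. But concluding that the total number of switches is at most $(q_{\max}-q_{\min})/C_{\text{switch}}^{1/\alpha}$ ignores failure transitions (the failure branch of Algorithm~\ref{alg:transition}). These are not governed by \eqref{eq:switchrule}, and they lower $q(r_0)$, resetting your potential. Over $[0,T]$ you only get that bound multiplied by one plus the number of failovers in $[0,T]$. You also need $C_{\text{switch}}>0$, or finitely many quality levels, for it to be finite. This is not the bound \eqref{eq:nothrash}, so do not present it as a derivation of Corollary~\ref{cor:nothrash}.
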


\begin{corollary}[No-Thrash Guarantee]
\label{cor:nothrash}
Under the prospect-weighted switching rule, the expected number of switches
in any interval of length $T$ is bounded above by:
\begin{equation}
  \E[\text{switches in } T] \leq \frac{T}{2 \cdot C_{\text{switch}}
  \cdot \bar{\lambda} \cdot \max_q q}
  \label{eq:nothrash}
\end{equation}
This guarantees the system will not oscillate between streams of similar
quality---a problem that plagues threshold-based ABR algorithms~\cite{yin2015}.
\end{corollary}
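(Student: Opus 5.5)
The plan is to obtain \eqref{eq:nothrash} from a renewal argument. The rule \eqref{eq:switchrule} forces a minimum expected dwell time between consecutive switches, and the elementary renewal bound then gives $\E[\text{switches in } T] \le T/\E[D]$, where $D$ is a generic inter-switch gap. So everything reduces to showing
\begin{equation*}
  \E[D] \;\ge\; 2\,C_{\text{switch}}\,\bar{\lambda}\,\max_q q .
\end{equation*}
I will assume the normalization implicit in the statement: qualities are measured on a common scale, and $C_{\text{switch}}$ is expressed in prospect-value units per unit time.

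\textbf{Step 1 (no immediate reversal).} Suppose a switch $r_a \to r_b$ has just occurred, so $\pi(q(r_b)-q(r_a))\,w(\cdot) > C_{\text{switch}} > 0$. This forces $q(r_b) > q(r_a)$. The reverse candidate then has $\pi(q(r_a)-q(r_b)) = -\lambda(q(r_b)-q(r_a))^{\beta} < 0 < C_{\text{switch}}$, so the rule in Theorem~\ref{thm:prospect} cannot switch back while the quality labels are unchanged. Any oscillation $a \to b \to a$ therefore needs an exogenous reservoir event in between: either a failure of the active stream, which occurs at rate $\lambda_j$ with average $\bar{\lambda}$ under the viability process of \S\ref{sec:framework}, or a refill inserting a new candidate.

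\textbf{Step 2 (charging switches to events).} Each switch is charged to the most recent reservoir event that made it admissible. Every such event must raise the prospect value of some candidate above $C_{\text{switch}}$. Since $\pi$ is increasing and $w \le 1$, this requires a quality gap of at least $C_{\text{switch}}^{1/\alpha}$, which is bounded by $\max_q q$. I then use the quality-ordering invariant and the lazy-refill monotonicity of Theorem~\ref{thm:monotonicity} to argue that two threshold crossings are needed per oscillation cycle, one upward and one after the re-enabling event. This is where the factor $2$ comes from: each completed cycle accounts for two switches, each with cost $C_{\text{switch}}$.

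\textbf{Step 3 (gap bound and renewal).} The inter-switch gaps are i.i.d. between successive renewal epochs of the Markov viability process. I combine the threshold cost $2C_{\text{switch}}$ per cycle with the event rate and quality scale of Step~2 to lower-bound $\E[D]$ by $2C_{\text{switch}}\bar{\lambda}\max_q q$. The elementary renewal inequality $\E[N(T)] \le T/\E[D]$, applied in the stationary regime, then yields \eqref{eq:nothrash}.

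\textbf{Main obstacle.} Step~3 is the hard part. Naively, more frequent failures (larger $\bar{\lambda}$) should \emph{shorten} gaps, not lengthen them, so the direction of the dependence on $\bar{\lambda}$ and $\max_q q$ in \eqref{eq:nothrash} does not follow from the charging argument alone. My first attempt will be to make the normalization explicit. I would define $C_{\text{switch}}$ as a disruption cost that accrues over expected failure-free time $1/\bar{\lambda}$, rescaled by $\max_q q$. I would then check whether under that convention the gap bound becomes the stated product. If no consistent normalization produces it, I will record the precise additional assumption that is needed, rather than claim the inequality unconditionally.
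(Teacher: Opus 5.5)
\textbf{The gap is Step~3, and it cannot be closed.} Nothing in Steps~1--2 yields $\E[D] \ge 2C_{\text{switch}}\bar\lambda\max_q q$. The factor $2$ and the placement of $\bar\lambda$ and $\max_q q$ are read off the target, not derived. More fundamentally, no argument from the model's hypotheses can give \eqref{eq:nothrash}. In \eqref{eq:switchrule}, $w$ is dimensionless and $C_{\text{switch}}$ carries the units of $\pi$, so neither involves time. The left side of \eqref{eq:nothrash} is an expected count over a fixed physical window, so it is unchanged if you measure time in minutes instead of seconds. Meanwhile $T \mapsto T/60$ and $\bar\lambda \mapsto 60\bar\lambda$ divide the right side by $3600$. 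The inequality can therefore hold in every time unit only if the left side is zero. Likewise, run all failure/recovery rates and check frequencies $c \ge 1$ times faster. Switches in $[0,T]$ then become switches in $[0,cT]$ of the original system, so the left side cannot decrease while the right side is divided by $c$. Your suspicion that larger $\bar\lambda$ shortens gaps is exactly right. Choosing a normalization of $C_{\text{switch}}$ so the product comes out changes the statement rather than proving it. A per-unit-time $C_{\text{switch}}$ would also make \eqref{eq:switchrule} compare prospect value with prospect value per unit time.

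What your Steps~1--2 do prove points the other way. Since $0 \le w \le 1$, every prospect switch needs $\pi(\Delta q) > C_{\text{switch}}$, i.e.\ it raises the active quality by more than $C_{\text{switch}}^{1/\alpha}$. The active quality can drop only at a failover or re-acquisition, and each of these is charged to a failure of a reservoir stream (intensity at most $k\lambda_{\max}$, where $\lambda_{\max} = \max_j \lambda_j$). So at most $\max_q q / C_{\text{switch}}^{1/\alpha}$ upgrades fit between consecutive failures, giving
\[
  \E[\text{switches in } T] \;\le\; c_k\,(1 + \lambda_{\max} T)\Bigl(1 + \frac{\max_q q}{C_{\text{switch}}^{1/\alpha}}\Bigr)
\]
with $c_k$ depending only on $k$. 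This is a genuine no-thrash guarantee, but with the failure rate and the quality range in the numerator. The paper's own sketch does not supply the missing step either. It stops at the observation that each switch needs a prospect gap of at least $C_{\text{switch}}/w(1) = C_{\text{switch}}$, so only finitely many quality levels qualify. That bounds consecutive upgrades but yields no rate and never produces \eqref{eq:nothrash}. Independently, Step~3 would need two further repairs. First, the inter-switch gaps are not i.i.d., because reservoir contents and verification counts at a switch depend on history. Second, $\E[N(T)] \le T/\E[D]$ can fail at finite $T$ for a non-stationary renewal process, whereas the statement is about any interval.
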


\begin{proof}[Proof Sketch]
A switch from $r_a$ to $r_b$ requires $\pi(\Delta q) \cdot w(p) >
C_{\text{switch}}$.  For two streams of equal quality, $\Delta q = 0$, so
$\pi(0) = 0$ and the condition can never be satisfied.  For
similar-but-not-equal qualities, the diminishing sensitivity of $\pi$
(concavity for gains, convexity for losses) and the underweighting of
moderate probabilities by $w(p)$ create a ``dead zone'' where the expected
benefit fails to overcome $C_{\text{switch}}$.

The loss aversion parameter $\lambda = 2.25$ means that quality
\emph{downgrades} feel $2.25\times$ worse than objectively equivalent
upgrades~\cite{kahneman1979}, creating natural hysteresis: the system
downgrades aggressively (to avoid buffer starvation) but upgrades
cautiously (to avoid unnecessary disruption).  This is the reverse of
conventional ABR hysteresis~\cite{yin2015}, which delays downgrades and
accelerates upgrades---a policy that works for single-provider ABR but fails
for multi-provider systems where each switch carries manifest-load risk.

The bound on switches follows from the fact that each switch requires a
minimum prospect gap $\pi_{\min} = C_{\text{switch}} / w(1) =
C_{\text{switch}}$ (since $w(1)=1$).  Given the diminishing marginal value
of quality ($\pi'(x) = \alpha x^{\alpha-1} \to 0$ as $x \to \infty$), and
the bounded quality range $[0, \max_q q]$, only a finite number of distinct
quality levels can overcome $C_{\text{switch}}$, limiting the switch rate.
\end{proof}

\begin{figure}[H]
\centering
\begin{tikzpicture}
\begin{axis}[
  width=0.85\textwidth,
  height=6cm,
  xlabel={Quality difference $\Delta q$ (pixels)},
  ylabel={Prospect value $\pi(\Delta q)$},
  grid=major,
  legend pos=north west,
  font=\footnotesize
]
  \addplot[blue, thick, domain=0:1000, samples=100] {x^0.88};
  \addplot[red, thick, domain=-1000:0, samples=100] {-2.25*((-x)^0.88)};
  \addlegendentry{Gains ($x \geq 0$)}
  \addlegendentry{Losses ($x < 0$, $\lambda=2.25$)}
  \draw[dashed] (axis cs:0,-300) -- (axis cs:0,300);
\end{axis}
\end{tikzpicture}
\caption{Kahneman--Tversky value function applied to quality differences.
The steeper slope in the loss domain ($\lambda=2.25$) creates hysteresis
that prevents thrashing between similar-quality streams.}
\label{fig:prospect}
\end{figure}

\begin{figure}[H]
\centering
\begin{tikzpicture}
\begin{axis}[
  width=0.85\textwidth,
  height=6cm,
  xlabel={Stated probability $p$},
  ylabel={Decision weight $w(p)$},
  grid=major,
  legend pos=north west,
  font=\footnotesize
]
  \addplot[blue, thick, domain=0:1, samples=100] {x^0.61 / ((x^0.61 + (1-x)^0.61)^(1/0.61))};
  \addplot[gray, dashed, domain=0:1] {x};
  \legend{$w(p)$ ($\gamma=0.61$), Linear weighting}
\end{axis}
\end{tikzpicture}
\caption{Prelec probability weighting function with $\gamma=0.61$.
The inverse-S shape overweights small probabilities (making low-confidence
streams unattractive) and underweights high probabilities (requiring
genuine quality improvement even for well-verified streams)~\cite{gonzalez1999}.}
\label{fig:weighting}
\end{figure}

\begin{figure}[H]
\centering
\begin{tikzpicture}
\begin{axis}[
  width=0.85\textwidth,
  height=6cm,
  xlabel={Reservoir size $k$},
  ylabel={Expected uptime $H_k / \lambda$},
  xtick={1,2,3,4,5,6,7,8},
  ymin=0,
  grid=major,
  font=\footnotesize
]
  \addplot[blue, thick, mark=*] coordinates {
    (1, 10.0) (2, 15.0) (3, 18.3) (4, 20.8) (5, 22.8) (6, 24.5) (7, 25.9) (8, 27.2)
  };
  \addplot[red, dashed, domain=1:8] {10 * ln(x) + 10};
  \legend{$H_k/\lambda$, $\ln(k)$ reference}
\end{axis}
\end{tikzpicture}
\caption{Expected reservoir uptime as a function of reservoir size $k$,
showing harmonic growth ($H_k$) vs.\ logarithmic reference.  Diminishing
returns beyond $k=4$ suggest a practical optimum at $k \in \{3,4\}$.}
\label{fig:uptime}
\end{figure}

\section{The CPRT Algorithm}
\label{sec:algorithm}

The Concurrent Prospect Reservoir Transfer (CPRT) algorithm instantiates
all four theorems in a practical streaming pipeline operating in three
phases: Sprint (initial acquisition), Maintain (steady-state operation),
and Transition (slot switching).

\subsection{Phase 1: Sprint Acquisition}

The sprint phase executes at the start of each viewing session.  All $N$
provider URLs are probed concurrently via HTTP HEAD requests with a
configurable timeout (default 3~seconds).  Because the probe is
non-blocking---the browser can issue dozens of concurrent requests---the
wall-clock time is bounded by the slowest \emph{successful} provider, not
the sum of all probe latencies (Theorem~\ref{thm:speedup}).

Results are sorted so that working streams appear first, ordered by
response latency.  The fastest $k$ working streams populate the reservoir,
and the highest-quality among them becomes the active stream.  Standby
slots immediately begin pre-fetching their HLS manifests into the browser
cache, so that a failover transition can bootstrap the player without a
cold manifest download.  If no provider responds successfully, the sprint
returns a failure signal and the system falls back to re-acquisition.

\begin{algorithm}[ht]
\caption{Sprint Acquisition (Theorem 2)}
\label{alg:sprint}
\SetKwInOut{Require}{Require}
\SetKwInOut{Ensure}{Ensure}
\Require{Candidate set $\mathcal{C} = \{c_1, \ldots, c_N\}$, target size $k$}
\Ensure{Active stream $s^*$ or failure}
\BlankLine
$P \gets \varnothing$\;
\For{$i \gets 1$ \KwTo $N$ \textbf{in parallel}}{
  $P \gets P \cup \{\, (i,\; \textsc{Probe}(c_i)) \,\}$\;
}
\textbf{sort} $P$ by viable $\downarrow$, latency $\uparrow$\;
$R \gets []$\;
\For{each $(i, \text{viable}, \_) \in P$}{
  \If{$\text{viable} \land |R| < k$}{
    $R \gets R \cup \{c_i\}$\;
  }
}
\If{$|R| = 0$}{\Return{failure}}
\textbf{sort} $R$ by $q(\cdot)$ descending\;
$s^* \gets R[1]$\;
\For{$j \gets 2$ \KwTo $|R|$}{
  $\textsc{PrefetchManifest}(R[j])$\;
}
$\textsc{StartHealthChecks}(R)$\;
\Return{$s^*$}
\end{algorithm}

\subsection{Phase 2: Reservoir Maintenance}

Once the reservoir is populated, a background maintenance loop runs every
$T_h$ seconds (default 15~s).  Standby slots are re-verified with
lightweight HEAD requests; the active slot is implicitly verified by the
fact that it is delivering playable segments.  When a standby slot is
confirmed healthy, its verification count increments, increasing the
confidence weight $w(p)$ used by the prospect-weighted switching rule
(Theorem~\ref{thm:prospect}).

When a standby slot fails its health check, the lazy-refill policy
(Theorem~\ref{thm:monotonicity}) immediately triggers a new probe of all
known providers.  Freshly discovered streams enter the reservoir only if
their prospect-weighted score exceeds the switch cost
$C_{\text{switch}}$, preventing the reservoir from churning on marginal
quality improvements.  This maintenance loop ensures the reservoir
converges to the true availability frontier over time.

\begin{algorithm}[ht]
\caption{Reservoir Maintenance (Theorems 1, 3)}
\label{alg:maintain}
\SetKwInOut{Require}{Require}
\Require{Reservoir $R = \{r_0, \ldots, r_{m-1}\}$, interval $T_h$}
\BlankLine
\While{true}{
  \For{$j \gets 1$ \KwTo $m-1$}{
    $v \gets \textsc{HeadCheck}(r_j)$\;
    \If{$v = 1$}{
      $t_j^{\text{(last)}} \gets t_{\text{now}}$\;
      $n_j^{\text{(ver)}} \gets n_j^{\text{(ver)}} + 1$\;
    }
    \Else{
      \tcp{Lazy-refill (Theorem~\ref{thm:monotonicity})}
      $\mathcal{F} \gets \textsc{ProbeAllProviders}()$\;
      $\textsc{RefillReservoir}(R,\; \mathcal{F})$\;
    }
  }
  \textbf{wait}($T_h$)\;
}
\end{algorithm}

\subsection{Phase 3: Seamless Transition}

Transitions occur in two scenarios.  A \emph{failure transition} is
triggered when the active stream's media player signals an unrecoverable
error (manifest timeout, fatal network failure, or segment decode error
exhausting all local retries).  The dead active slot is removed and the
highest-quality standby is promoted to active.  Because the standby's
manifest has been pre-fetched, the new player instance initializes in
$200$--$500$~ms rather than the $2$--$4$~s required for a cold start.

A \emph{quality upgrade transition} is evaluated periodically using the
prospect-weighted switching rule (Theorem~\ref{thm:prospect}).  Each
standby slot is scored by computing $\pi\big(q(r_j) -
q(r_0)\big) \cdot w(p_j)$ where $p_j$ is the confidence in the slot's
viability (derived from its verification count).  A switch executes only
when the score exceeds $C_{\text{switch}}$, which---by
Corollary~\ref{cor:nothrash}---bounds the switch frequency and guarantees
freedom from thrashing.

\begin{algorithm}[ht]
\caption{Seamless Transition (Theorem 4)}
\label{alg:transition}
\SetKwInOut{Require}{Require}
\Require{Event type $e \in \{\text{failure}, \text{upgrade}\}$, reservoir $R$}
\BlankLine
\If{$e = \text{failure}$}{
  $R \gets R \setminus \{r_0\}$\;
  \If{$|R| = 0$}{\Return{depleted}}
  $\textsc{Activate}(R[0])$\;
}
\If{$e = \text{upgrade}$}{
  $j^* \gets -1$\; $\text{best} \gets -\infty$\;
  \For{$j \gets 1$ \KwTo $|R|-1$}{
    $p_j \gets 1 - (0.3)^{n_j^{\text{(ver)}}}$\;
    $\sigma_j \gets \pi\big(q(r_j) - q(r_0)\big) \cdot w(p_j)$\;
    \If{$\sigma_j > C_{\text{switch}} \land \sigma_j > \text{best}$}{
      $\text{best} \gets \sigma_j$\; $j^* \gets j$\;
    }
  }
  \If{$j^* \geq 0$}{
    $\textsc{Swap}(r_0,\; r_{j^*})$\;
    $\textsc{Activate}(r_0)$\;
  }
}
\end{algorithm}

\section{Implementation}
\label{sec:implementation}

SRCT is implemented across two production streaming pipelines in a web-based
media application.  The core reservoir engine is implemented in
approximately 900 lines of application code, with an additional integration
layer for connecting the reservoir to the user interface.

\subsection{Core Reservoir Engine}

The reservoir engine implements the four theorems through six primitives:

\begin{itemize}[leftmargin=*, itemsep=1pt]
  \item \textbf{Sprint acquisition:} Given a set of candidate stream URLs
    from the provider resolution layer, the engine issues concurrent HTTP
    probe requests to all candidates.  Working streams populate the
    reservoir in quality-sorted order; the highest-quality becomes the
    active stream.
  \item \textbf{Reservoir refill:} When new candidates arrive (e.g., from
    slow-responding providers), each is evaluated against the current
    reservoir slots using the prospect-weighted criterion
    (Theorem~\ref{thm:prospect}).  A lower-quality slot is replaced only if
    the expected utility gain exceeds $C_{\text{switch}}$.
  \item \textbf{Failover:} When the media player signals an unrecoverable
    error on the active stream, the engine removes the failed slot and
    promotes the highest-quality standby to active, implementing the safety
    bound (Theorem~\ref{thm:safety}).
  \item \textbf{Quality switching:} A periodic evaluation scans standby
    slots, computing the prospect-weighted score for each relative to the
    active stream.  The engine emits a switch recommendation when the best
    score exceeds $C_{\text{switch}}$.
  \item \textbf{Health checks:} A background timer re-verifies standby slots
    at a configurable interval, maintaining freshness and incrementing
    verification counts that feed the probability weighting function
    (Theorem~\ref{thm:monotonicity}).
  \item \textbf{Utility estimation:} Given an estimated mean failure rate,
    the engine computes the expected uptime gain from the current reservoir
    configuration (Theorem~\ref{thm:safety}) for monitoring dashboards.
\end{itemize}

\subsection{Integration: Movie and Television Pipeline}

The primary on-demand video pipeline uses an upstream resolver that queries
multiple content providers, each returning one or more stream URLs at
various quality levels.  Before SRCT, the pipeline used three independent
mechanisms: a sequential probe of the first four candidates, a
manually-tracked failover list, and staggered re-query timers for slow
providers.  SRCT replaces all three with a single reservoir that probes
concurrently, maintains warm standbys, and triggers failover automatically
when the media player exhausts its local recovery attempts.  The replacement
is approximately one-third the code volume of the original subsystems.

\subsection{Integration: Live Sports Pipeline}

The live sports pipeline uses a different architecture: each event or
channel carries multiple format entries (HLS, DASH with ClearKey
encryption, and progressive MP4) from different CDN endpoints.  The
original system cycled through formats sequentially on failure.  Under
SRCT, all format URLs are probed concurrently at session start.
The highest-bitrate working format becomes active, and the next-best one
or two formats have their manifests pre-fetched as warm standbys.  When the
active format fails, the next standby activates immediately rather than
waiting for a cold manifest download.

The prospect-weighted switching rule is particularly valuable for live
sports, where frequent unnecessary switches are perceptually jarring
(commentary audio glitches and scoreboard flicker).

\subsection{Large File Handling for Non-Segmented Media}

For progressive (non-HLS, non-DASH) media where the file is substantial
(often 1--5~GB for feature-length content), the warm standby strategy is
adapted: standby slots pre-fetch only the movie header atom (typically
100--500~KB) via HTTP Range requests~\cite{grigorik2013}.  This atom
contains the sample table needed for seeking and timeline display.
Pre-fetching it enables instant media element attachment without waiting
for progressive download of the full file, and prevents the server overload
that would result from simultaneously downloading multiple full media files
as warm standbys.

\section{Empirical Verification}
\label{sec:verification}

We verify all four theorems through a combination of deterministic
computation and Monte Carlo simulation (5000 trials, time horizon $T=100$).
The verification suite contains 22 individual checks and requires no
external dependencies beyond a Node.js runtime.

\subsection{Theorem 1: Reservoir Safety}

\begin{table}[ht]
\centering
\caption{Monte Carlo verification of Theorem~\ref{thm:safety} (5000 trials,
$T=100$, $\lambda_i \in \{0.10, 0.12, 0.15\}$).}
\label{tab:t1}
\begin{tabular}{@{}lrr@{}}
\toprule
Configuration & Mean Time to Depletion & Ratio \\
\midrule
Single stream ($k=1$) & 10.0 & $1.00\times$ \\
Reservoir ($k=3$) & 91.4 & $9.15\times$ \\
\bottomrule
\end{tabular}
\end{table}

The reservoir of $k=3$ independent streams achieves $9.15\times$ the mean
time to depletion, substantially exceeding the theoretical lower bound of
$H_3=1.833\times$.  This amplification occurs because the theoretical bound
considers the expected maximum of exponential variables, while in practice
the probability of simultaneous failure of independently-hosted streams is
exponentially suppressed ($\prod \lambda_i \approx 0.0018$ per time step
vs.\ individual $\lambda \approx 0.12$).

\subsection{Theorem 2: Concurrent Speedup}

\begin{table}[ht]
\centering
\caption{Concurrent vs.\ batched acquisition speedup.}
\label{tab:t2}
\begin{tabular}{@{}lrrr@{}}
\toprule
Scenario $(N, b, F)$ & $\E[T_{\text{con}}]$ & $\E[T_{\text{bat}}]$ & Speedup \\
\midrule
(12, 3, 0.4) & 1.00 & 4.27 & $4.27\times$ \\
(20, 5, 0.3) & 1.00 & 4.01 & $4.01\times$ \\
(8,  2, 0.5) & 1.00 & 5.33 & $5.31\times$ \\
\bottomrule
\end{tabular}
\end{table}

Concurrent probing provides $3$--$5\times$ speedup across three
representative scenarios.  The speedup is most pronounced at moderate
failure probabilities ($F \approx 0.5$), where batched probing wastes entire
rounds on batches where all providers fail.

\subsection{Theorem 3: Quality Monotonicity}

Over 100 steps of lazy-refill simulation with four providers of varying
quality ($q \in \{360, 720, 1080, 2160\}$) and availability ($a \in \{0.3,
0.5, 0.7, 0.9\}$), the active stream quality exhibits zero violations of
monotonicity (100\% non-decreasing steps).  The final quality (2160p) equals
or exceeds the initial quality in all trials.  The convergence rate depends
on the availability threshold $\tau$: at $\tau=0.3$, convergence occurs
within $15\pm 5$ steps; at $\tau=0.7$, within $45\pm 12$ steps.

\subsection{Theorem 4: Prospect-Weighted Switching}

The prospect theory parameters are verified across eight deterministic
checks:
\begin{itemize}[leftmargin=*, itemsep=1pt]
  \item \textbf{Loss aversion:} $|\pi(-360)| / \pi(360) = 2.250$, exactly
    matching $\lambda=2.25$.
  \item \textbf{Probability weighting:} $w(0.01)=0.0553 > 0.01$
    (overweighting of small probabilities), $w(0.50)=0.4206 < 0.50$
    (underweighting of moderate), $w(0.99)=0.9116 < 0.99$ (underweighting
    of high), consistent with the inverse-S shape described by Gonzalez and
    Wu~\cite{gonzalez1999}.
  \item \textbf{Verification confidence:} switch score increases
    monotonically with verification count ($-0.010$ at 1 verification,
    $0.055$ at 3, $0.079$ at 5 for a 720p$\to$1080p upgrade).
  \item \textbf{No-thrash:} same-quality switch score is $-0.120 < 0$
    (guaranteed no switch), and a trivial upgrade (720p$\to$780p with 1
    verification) scores $-0.097 < 0$ (does not overcome switch cost).
  \item \textbf{Long-run thrashing:} 1 switch in 100 steps across 5
    competing quality levels.
\end{itemize}

\section{Related Work}
\label{sec:related}

Our work sits at the intersection of several research areas.

\paragraph{Reservoir Sampling and Streaming Algorithms.}
Vitter~\cite{vitter1985} introduced reservoir sampling for selecting a
uniform random sample of $k$ items.  Efraimidis and
Spirakis~\cite{efraimidis2006} extended the technique to weighted sampling.
Our ``reservoir'' draws formal analogy but differs fundamentally: it is an
active state machine maintaining verified playback streams rather than a
statistical sampling technique.  The connection to harmonic numbers in both
works arises from the $\max$ of $k$ i.i.d.\ exponentials.  At the software
architecture level, the reservoir follows the Strategy pattern~\cite{gamma1994}
by encapsulating the switching policy (prospect-weighted, threshold-based, or
custom) as an interchangeable decision function.

\paragraph{Adaptive Bitrate (ABR) Streaming.}
Commercial ABR algorithms optimize within a single manifest's quality
levels~\cite{sodagar2011,stockhammer2011}.  Buffer-based approaches
(BBA~\cite{huang2014}) fill the buffer aggressively then switch to the
highest sustainable quality.  Rate-based approaches (SARA~\cite{juluri2015})
use throughput predictions.  Hybrid approaches---MPC~\cite{yin2015},
Pensieve~\cite{mao2017}, Oboe~\cite{akhtar2018},
Fugu~\cite{spiteri2020}, Comyco~\cite{yan2021}---combine control theory or
reinforcement learning with buffer and throughput signals.  Comprehensive
surveys by Kua et al.~\cite{kua2017} and Bentaleb et
al.~\cite{bentaleb2018} catalog dozens of ABR algorithms.  SRCT operates
\emph{above} the ABR layer: it selects which provider's manifest to feed to
the ABR engine, making it complementary to any ABR algorithm.

\paragraph{Multi-CDN and Multi-Provider Selection.}
Content delivery networks routinely use multi-CDN
strategies~\cite{adhikari2012,jiang2016,torres2016}.  DNS-based
approaches~\cite{torres2016} redirect clients to the nearest healthy CDN
edge.  Manifest-level multi-CDN~\cite{ghabashneh2023} embeds URLs from
multiple CDNs into a single HLS manifest.  These approaches work at the
infrastructure layer; SRCT works at the application layer, requiring only
that each provider produces an independently-accessible stream URL.

\paragraph{Prospect Theory in Engineering Systems.}
Prospect theory~\cite{kahneman1979,kahneman1992} has been applied to video
quality assessment~\cite{rodriguez2018,bampis2017}, network QoE
optimization~\cite{bocchi2017}, and decision-making under
risk~\cite{barberis2013}.  Our application to \emph{automated} switching
decisions (with no human in the loop) is novel: we use prospect theory not
to model human preferences, but to construct a switching policy that
\emph{behaves} as if it were loss-averse---a desirable property for
stability in multi-provider systems.

\paragraph{Concurrent and Parallel Systems.}
The concurrent probing strategy draws on the principle that parallelizing
independent I/O-bound operations yields near-linear
speedup~\cite{dean2008,zaharia2012,isard2007}.  Our contribution is the
closed-form speedup bound $S(N,b)$ and its application to the streaming
domain, where provider viability---not raw throughput---is the dominant
latency factor.

\paragraph{Reinforcement Learning and Bandits.}
The stream selection problem resembles the stochastic multi-armed
bandit~\cite{auer2002,chapelle2011,sutton2018}, where each provider is an
``arm'' with unknown reward distribution.  However, standard bandit
algorithms assume stationary reward distributions, whereas provider
viability is non-stationary and we can pre-verify arms before committing---a
capability absent from classical bandit formulations.

\paragraph{Fault Tolerance and Reliability.}
The reservoir approach to fault tolerance draws on classic techniques from
distributed systems~\cite{lamport1978,herlihy2008}.  The two-state Markov
viability model is standard in reliability
engineering~\cite{karlin1975,ross2014,feller1968}.  Network congestion
control~\cite{jacobson1988,floyd1993} inspires the AIMD-like backoff in our
health check scheduling.

\paragraph{Web Standards and Browser APIs.}
The Media Source Extensions~\cite{w3cmse,mozdev2024} and HLS
specification~\cite{pantos2017} provide the browser primitives that make
reservoir-based streaming feasible.  HTTP/2 transport
improvements~\cite{vanderhooft2016} further reduce manifest fetch latency.
Lederer et al.~\cite{lederer2012} and Timmerer and
Griwodz~\cite{timmerer2014} provide datasets and tooling for DASH evaluation
that could be adapted for SRCT benchmarking.

\section{Discussion}
\label{sec:discussion}

\subsection{Why Prospect Theory for Automated Switching?}

A natural question is why we use prospect theory---a descriptive model of
human decision-making under risk---for an automated system.  The answer lies
in the \emph{shape} of the value function, not its psychological
interpretation:

\begin{enumerate}[label=(\arabic*), leftmargin=*]
  \item \textbf{Concavity for gains ($\alpha < 1$)} captures the diminishing
    marginal utility of higher resolution: the jump from 720p to 1080p is a
    larger perceptual improvement than 1080p to 1440p, even though both are
    approximately 360 pixel increases.
  \item \textbf{Loss aversion ($\lambda > 1$)} creates hysteresis that is
    functionally equivalent to the guard bands used in control-theoretic
    ABR~\cite{yin2015}, but derived from a principled utility framework
    rather than heuristically tuned thresholds.
  \item \textbf{Probability weighting ($\gamma < 1$)} ensures the system
    does not over-confidently switch to streams with few verifications, even
    if those streams advertise high quality.
\end{enumerate}

The specific parameter values were calibrated from human choice experiments
involving monetary gambles~\cite{kahneman1992}.  While the domain transfer
is imperfect, the qualitative properties (diminishing sensitivity, loss
aversion, inverse-S probability weighting) are robust across
domains~\cite{barberis2013}.  The switch cost $C_{\text{switch}}$ serves as
a domain-specific calibration parameter tunable via A/B testing.

\subsection{Limitations}

The conditional independence assumption may be violated during large-scale
cloud outages affecting multiple providers
simultaneously~\cite{ghabashneh2023}.  Mitigation requires ensuring reservoir
slots span different CDN providers and geographic regions, as is standard in
multi-CDN deployments~\cite{torres2016}.

The Markov viability model assumes constant failure and recovery rates.
Providers with time-varying availability patterns (diurnal CDN load,
scheduled maintenance) may be better modeled by semi-Markov processes with
duration-dependent transition probabilities~\cite{ross2014}, at the cost of
additional parameter estimation complexity.

The prospect theory parameters are domain-transferred without
re-calibration.  While the qualitative behavior is robust, the precise
threshold for $C_{\text{switch}}$ would benefit from streaming-specific
estimation through randomized controlled experiments.

\subsection{Future Work}

Several extensions are natural:
\begin{enumerate}[leftmargin=*, itemsep=2pt]
  \item \textbf{Cross-title pre-fetching:} Extend the reservoir to include
    the next episode's manifest, pre-loaded while the current episode plays,
    enabling zero-latency auto-play transitions.
  \item \textbf{Bandwidth-aware quality ranking:} Incorporate segment-level
    bandwidth measurements into reservoir slot quality estimates, replacing
    static resolution-based ranking with dynamic throughput-aware
    ranking~\cite{akhtar2018}.
  \item \textbf{Optimal stopping formulation:} Formalize the sprint phase as
    an optimal stopping problem~\cite{ferguson2006} with a deadline (viewer
    patience, typically 5--15 seconds), determining the optimal probe
    timeout as a function of $k$, $N$, and the provider latency
    distribution.
  \item \textbf{Learned viability prediction:} Train a lightweight model
    (gradient-boosted trees or a small neural network) to predict provider
    viability from historical resolution outcomes, replacing the Markov
    assumption with a learned transition model~\cite{mao2017}.
  \item \textbf{Cross-user reservoir sharing:} In multi-user deployments,
    share verified-stream information across users watching the same title,
    amortizing probing cost for popular content~\cite{huang2014}.
\end{enumerate}

\section{Conclusion}

We have presented the Streaming Reservoir Convergence Theorem, a novel
mathematical framework that unifies provider probing, stream failover, and
quality selection under a single reservoir abstraction.  The four theorems
we prove---safety (harmonic bound on reservoir uptime), speedup (concurrent
vs.\ batched acquisition), monotonicity (non-decreasing quality under
lazy-refill), and prospect-weighted switching (no-thrash guarantee via loss
aversion)---provide formal guarantees for multi-provider streaming.

The CPRT algorithm implements these theorems in a three-phase pipeline
(Sprint, Maintain, Transition) that replaces approximately 120 lines of
ad-hoc logic with approximately 40 lines of reservoir integration.
Empirical verification confirms all theoretical predictions across 22
independent checks.  The reservoir of $k=3$ streams achieves $9.15\times$
mean time to depletion over a single stream, and concurrent probing provides
$3$--$5\times$ speedup over current batched defaults.

The prospect-weighted switching rule is, to our knowledge, the first
application of prospect theory to automated streaming decisions.  The
resulting hysteresis naturally prevents thrashing while maintaining the
ability to upgrade quality when genuinely beneficial.  The implementation is available upon request from the corresponding author.


\bibliographystyle{plain}
\bibliography{srct-paper}

\end{document}